\documentclass[10pt,twocolumn]{article}
\usepackage[top=1in, bottom=1in, left=0.85in, right=0.85in]{geometry}
\usepackage{amsmath,amsthm,amssymb}
\usepackage{mathtools}
\usepackage{graphicx}
\usepackage{hyperref}
\usepackage{mathrsfs}
\usepackage{physics}
\usepackage{dsfont}
\usepackage{xcolor}
\usepackage{authblk}
\usepackage{cuted}      
\usepackage{balance}    
\usepackage{listings}
\usepackage[backend=biber]{biblatex}
\usepackage{booktabs}
\usepackage{makecell}
\definecolor{Green}{rgb}{0,.8,.4}
\definecolor{Plum}{rgb}{.5,0,1}
\definecolor{cyan}{rgb}{0.50,.9,0.9}
\definecolor{lightblue}{rgb}{0.3, 0.6, .7}
\definecolor{ocre}{RGB}{243,102,25}
\definecolor{mygray}{RGB}{243,243,244}

\usepackage{fancyvrb}
\usepackage{bbm}

\newtheorem{thm}{Theorem}[section]

\newtheorem{lemma}[thm]{Lemma}

\theoremstyle{definition}

\theoremstyle{remark}

\numberwithin{equation}{section}

\newcommand{\caH}{{\mathcal H}}

\newcommand{\caL}{{\mathcal L}}

\newcommand{\caV}{{\mathcal V}}

\newcommand{\Z}{\mathbb{Z}}

\newcommand{\be}{\begin{equation}}
	\newcommand{\ee}{\end{equation}}
\newcommand{\bea}{\begin{eqnarray}}
	\newcommand{\eea}{\end{eqnarray}}
\newcommand{\beann}{\begin{eqnarray*}}
	\newcommand{\eeann}{\end{eqnarray*}}

\title{Matrix Product State Representatives of Symmetry-Protected Phases}
\author{Thomas Jackson}
\date{\today}

\begin{document}
\twocolumn[
  \begin{@twocolumnfalse}
    \maketitle
    \begin{abstract}
    We provide a simple ansatz for finding integer spin $SO(3)\times\Z_2^T\times\Z_2^R$-invariant matrix product states representing a phase for each triple of $\Z_2$ topological indices studied in the literature by Tasaki \cite{tasakiheistop}\cite{tasakitop} and Ogata (corresponding to time-reversal and reflection symmetries) \cite{ogatatime}\cite{ogatareflection}. Every combination of trivial and non-trivial indices has a spin-$1$ $SU(2)$, time-reversal, and site-reflection symmetric MPS representative with injectivity length $l\leq 8$. We provide some exact examples for spin-$1$ and a Fortran program which will find representatives of any triple for any integer spin, as well as some data from random sampling for small spin. The existence of such states agrees with the classification of Chen, Gu, and Wen \cite{chen}, and with the recent results of Tasaki \cite{tasakinew} equating the hidden-order and one of Ogata's indices, wherein the Ogata index is that associated to the dihedral $D_2$ symmetry.
    \vspace{2\baselineskip} 
    \end{abstract}
  \end{@twocolumnfalse}
]

\maketitle
\section{Introduction}
The classification of quantum phases has been an extremely fruitful and interdisciplinary area of research in recent years \cite{bachmann}\cite{chen}\cite{ogata}. Mathematical models for arbitrary phases in one-dimension have been somewhat \textit{ad hoc}. Here we provide a convenient matrix product state ansatz for phases distinguished by an on-site $SO(3)$ and global time-reversal $\Z_2$ and reflection $\Z_2$ symmetries. The Haldane phase is protected by each of these individually, however there exist phases of matter in all $8$ combinations of trivial and non-trivial indices \cite{chen}. We review the calculation of these indices before providing the ansatz; its form can be easily generalized to arbitrary on-site symmetry, although proving that it is injective is another step. In the case of spin-$1$ we provide an explicit example relying on the neat fact that the coefficients of the Clebsch-Gordan tensor lie in a particular ring of algebraic integers. We provide Fortran code to recreate this random sampling for any spin-$s$.
This result is mainly meant to communicate the flexibility of the ansatz, which can be seen as a generalization of the AKLT model \cite{aklt} with a degenerate bond algebra.\par
To clarify any confusion, the phases here with "flipped" time-reversal and reflection indices are not in contradiction to the recent result \cite{tasakinew}, in which the Ogata index is the one associated to the $D_2$ symmetry.
\section{Definitions and topological indices}
For now we will speak in terms of general spin $s$ with our onsite Hilbert space is $\caH_x\cong\mathbb{C}^{2s+1}$. We define our matrix product state tensor $A=(A^{-1},A^0, A^1)$ of bond dimension $D$, which defines a state on a finite chain $[1,N]$:
$$|\psi\rangle=\sum_{i_1,...,i_N}\Tr(A_{i_1}...A_{i_N})|i_1,...,i_N\rangle$$
and call it normal with injectivity length $l$ if $l$ is the least integer such that \begin{equation}
\mathrm{span}\{A^{i_1},...,A^{i_l}:\: i_k\in\{-1,0,1\}\}=\mathcal{M}_D(\mathbb{C})
\end{equation} meaning that after blocking $l$ sites this tensor is injective, which implies primitivity of the transfer operator $$\mathbb{E}(X)=\sum_i A^i X (A^i)^*$$ Such a tensor defines a pure infinite-volume matrix product state with a gapped, finite-range parent Hamiltonian \cite{fcs}.
There are three indices we consider, corresponding to on-site and global symmetries.

\subsection{Tasaki's hidden order index}
For an infinite-volume state $\omega:\mathcal{A}\to\mathbb{C}$, we use Tasaki's twist operator \cite{tasaki}
\begin{equation}
V_l=\bigotimes_{j=-l}^{l+1}\exp{\left[-2\pi i\frac{j+l}{2l+1}S^z_j\right]}
\end{equation}
where $S^z_j$ is the $z$ spin matrix at site $j$. It was showed that for $SU(2)$-invariant states the expectation of this operator is real and has a constant non-zero sign for all sufficients large $l$, meaning \begin{equation}
\lim_{l\to\infty}\frac{\omega(V_l)}{|\omega(V_l)|}=:\sigma_H\in\Z_2
\end{equation}This index $\sigma_H\in\Z_2$, distinguishes, in the case of matrix-product states, the type of representation of the bond algebra; it has been shown to be non-trivial for two states \cite{tasakitop}\cite{tasakiheistop} and under the stability assumptions of \cite{quasilocality2} has been shown to be non-trivial for a continuum of states in an infinite-dimensional (though small) pertubative region around the AKLT point, and, more broadly, for a perturbative region around any matrix product state with non-trivial Tasaki index $\sigma_H=-1$. For the matrix product states studied here we provide a short proof that $\omega(V_l)\to\sigma_H$. We will work through how to compute this index for the specific family of MPS studied here in a later section. Under $U(1)\rtimes\Z_2$ on-site symmetry, it was shown to be equivalent to Ogata's $D_2$ index \cite{tasakinew}.

\subsection{Ogata's time-reversal index}
First studied in \cite{ogatatime}, the index we label $\sigma_T\in\Z_2$ arises from the time-reversal anti-unitary:
\begin{equation}
\begin{split}
\Theta&=e^{-i\pi S_{\mathrm{tot}^y}}\mathcal{C},\quad S_{\mathrm{tot}}^y=\sum_{j\in\Z}S_j^y,\\
\Theta|i\rangle&=(-1)^{1-i}|-i\rangle, \quad i\in\{-1,0,1\}
\end{split}
\end{equation}
where $\mathcal{C}$ is the complex conjugation operator in the $S^z$ basis.
The index $\sigma_T$ is defined as the sign associated to the square of implementing time reversal on the (right) half-chain. For the matrix product states considered here this antiunitary $W$ acts as
where \begin{equation}
\begin{split}
W\overline{A^i}W^{-1}&=(-1)^{1-i}A^{-i},\\
W\overline{W}&=\sigma_T\mathbbm{1},\quad i\in\{-1,0,1\} 
\end{split}
\end{equation}
The first relation can contain a scalar phase but the states considered here do not have this property.

\subsection{Ogata's reflection index}
We label the index $\sigma_R\in\Z_2$ which can be found in \cite{ogatareflection}, that comes from the whole-chain reflection symmetry of the state. It is defined by \begin{equation}
(A^q)^T=pR^{-1}A^q R,\quad R^T=:\sigma_R R
\end{equation} where $R^2=R_x^2=\mathbbm{1}$ is the intertwiner enacting the standard global reflection at a site $x$, and $p\in\{-1,+1\}$. Here we assume in the first equality translation invariance.

We note that the recent result \cite{tasakinew} proves equivalence of Tasaki's twist operator with the index associated to the dihedral symmetry under on-site $U(1)\rtimes \mathbb{Z}_2$, which is separate from those of time-reversal and reflection.
The main result of this paper is the following theorem:
\begin{thm}
There exists an $SO(3)\times\Z_2^T\times\Z_2^R$-invariant, normal matrix product state $\psi$ for any triple of indices $(\sigma_H,\sigma_T,\sigma_R)\in\Z_2\times\Z_2\times\Z_2$ with injectivity length $l\leq 8$.
\end{thm}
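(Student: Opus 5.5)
Since the statement is existential, the proof is a construction. For each triple we exhibit one spin-$1$ tensor and check the symmetry relations and injectivity directly. The ansatz generalizes AKLT with a degenerate bond algebra. Take the bond space to be $\mathbb{C}^D=\bigoplus_k V_{j_k}\otimes\mathbb{C}^{m_k}$, a direct sum of $SU(2)$ irreps $V_{j_k}$ with multiplicity spaces $\mathbb{C}^{m_k}$. We insist that all $j_k$ be half-integer or all be integer, so that the bond representation is either genuinely projective or linear. Then set
\[
A^i=\sum_{k,k'} C^{(1)}_{j_k,j_{k'}}(i)\otimes M_{kk'},
\]
where $C^{(1)}_{j,j'}(i)$ is the spin-$1$ Clebsch--Gordan intertwiner $V_{j'}\to V_1\otimes V_j$ and $M_{kk'}$ are free coefficient matrices between multiplicity spaces. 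By Schur's lemma this is the most general $SU(2)$-covariant tensor, $\sum_j (e^{i\theta\cdot S})_{ij}A^j=U_\theta A^i U_\theta^{*}$. So $SO(3)$ invariance is automatic.

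The three indices are then read off separately. For $\sigma_H$, I will show that $\omega(V_l)\to\sigma_H$ for such normal MPS, with $\sigma_H=-1$ exactly when the $j_k$ are half-integer. The argument follows Tasaki: push the twist through the tensors using covariance. The limit is $\pm 1$ according to $e^{2\pi i S^z}$ on the bond, which equals $(-1)^{2j}$. Normality is what lets the transfer-operator contributions converge.

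For $\sigma_T$, use the CG identity $\overline{C(i)}=C(i)$ together with the rotation by $\pi$ about $y$. This gives
\[
W=\bigoplus_k e^{-i\pi S^y_{j_k}}\otimes w_k,
\]
provided the $M$'s are real and satisfy a compatibility relation with the $w_k$. Then $W\overline W=(-1)^{2j}\,w\overline{w}$. Taking the multiplicity space to carry a pair with $w=\begin{pmatrix}0&1\\-1&0\end{pmatrix}$ flips $\sigma_T$ relative to $\sigma_H$. Taking $w=\mathbbm{1}$ keeps $\sigma_T=\sigma_H$.

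For $\sigma_R$, use the CG symmetry $C_{j,j'}(i)^T\propto (-1)^{\dots}\,C_{j',j}(i)$, conjugated by $e^{i\pi S^y}$. This gives
\[
R=\bigoplus_k e^{i\pi S^y}\otimes r_k,
\]
where $r$ is symmetric or antisymmetric, and the $M$'s must satisfy $M^T=\pm rMr^{-1}$. This yields $\sigma_R=(-1)^{2j}\cdot(\text{sign of } r)$. So $\sigma_T$ and $\sigma_R$ can each be flipped independently relative to $\sigma_H$ by the choice of multiplicity structure.

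Concretely, I will take $j\in\{0,1\}$ or $j\in\{\tfrac12,\tfrac32\}$ with multiplicity $2$. The multiplicity blocks $M_{kk'}$ are chosen to satisfy the two linear constraints, in the real form $W,R$ commuting appropriately. I will exhibit explicit integer or $\sqrt{2},\sqrt{3}$-valued entries, exploiting that the CG coefficients lie in a fixed ring of algebraic integers, one choice for each of the 8 triples.

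The main obstacle is normality with $l\le 8$. The symmetry constraints cut down the $M$'s, and degenerate multiplicities tend to produce extra conserved structure, i.e. non-injectivity. I will handle this by computing $\dim\mathrm{span}\{A^{i_1}\cdots A^{i_l}\}$ for $l\le8$ and checking that it equals $D^2$. The certificate is a nonzero $D^2\times D^2$ minor. Because the entries are algebraic integers, the minor can be evaluated exactly, for instance modulo a prime of the ring, which certifies nonvanishing rigorously. If a candidate fails, I resample the free parameters within the constrained family, as in the Fortran search, until an exact certificate is found.
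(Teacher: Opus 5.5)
\textbf{Gap: multiplicity $2$ cannot work when $\sigma_R\neq\sigma_H$.} Your overall route is the paper's. You use the same $SU(2)$-covariant ansatz $A^i=\sum C^i_{\gamma\gamma'}\otimes M_{\gamma\gamma'}$ and the same result $\sigma_H=(-1)^{2j}$. You use intertwiners $W=U\otimes w$ and $R=U\otimes r$ giving $\sigma_T=\sigma_H\sigma_w$ and $\sigma_R=\sigma_H\sigma_r$, and you certify injectivity exactly modulo a prime. The gap is your concrete choice of multiplicity $2$. It provably cannot give a normal state in the four sectors $(+1,\pm1,-1)$ and $(-1,\pm1,+1)$.

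There $r$ must be antisymmetric, and on $\mathbb{C}^2$ this forces $r\propto j:=\bigl(\begin{smallmatrix}0&1\\-1&0\end{smallmatrix}\bigr)$, for which $rX^Tr^{-1}=\mathrm{adj}(X)$. The reflection constraints then have two consequences:
\begin{itemize}
\item[(a)] $M_{\gamma\gamma}=\pm\,\mathrm{adj}(M_{\gamma\gamma})$, so each diagonal block is scalar or traceless;
\item[(b)] $M_{\beta\alpha}\propto\mathrm{adj}(M_{\alpha\beta})$, so the round trip $M_{\alpha\beta}M_{\beta\alpha}$ is a scalar.
\end{itemize}
In the scalar case the tensor is gauge equivalent to an $n=1$ tensor tensored with $\mathbbm{1}_2$, which is not injective.

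In the traceless case, write $M_{\alpha\alpha}=u\cdot\vec\sigma$ and $M_{\alpha\beta}M_{\beta\beta}M_{\alpha\beta}^{-1}=v\cdot\vec\sigma$. Set $\Gamma_\alpha=(u\times v)\cdot\vec\sigma$ and $\Gamma_\beta=M_{\alpha\beta}^{-1}\Gamma_\alpha M_{\alpha\beta}$. Then $Y=\mathrm{diag}(\mathbbm{1}_{V_\alpha}\otimes\Gamma_\alpha,\,-\mathbbm{1}_{V_\beta}\otimes\Gamma_\beta)$ is generically invertible and non-scalar, and $YA^iY^{-1}=-A^i$ for every $i$. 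Hence every length-$l$ product lies in the $(-1)^l$-eigenspace of $X\mapsto YXY^{-1}$, which is a proper subspace of $\mathcal{M}_D(\mathbb{C})$. The tensor is $2$-periodic.

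For fixed $l$, non-injectivity is a Zariski-closed condition. It holds on a dense open subset of the complexified constrained family, so it holds on all of it, including every real form cut out by $w$. The same argument works for your integer choice $\{0,1\}$; there the $(0,0)$ block is simply absent and you only need $\Gamma$ anticommuting with the conjugated $M_{11}$. So for these four triples your resampling loop never terminates. This is a structural obstruction, not bad luck.

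\textbf{What is missing: a larger multiplicity space in those sectors.} The paper uses $\mathbb{C}^4$ with $K=\mathrm{diag}(j,-j)$ and $J\in\{\mathbbm{1}_4,\mathrm{diag}(j,j)\}$, so that $JK=KJ$. There $KX^TK^{-1}$ is no longer the adjugate and $M_{\alpha\beta}M_{\beta\alpha}$ is generic. This matches the paper's table. Multiplicity $2$ (bond dimension $12$ or $16$, $l=6$) appears only where $\sigma_R=\sigma_H$. Every sector with $\sigma_R\neq\sigma_H$ uses multiplicity $4$ (bond dimension $24$ or $32$) and has $l=8$. That is exactly at the bound, so you should not expect any slack in $l\le 8$ there.

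A minor point: when $w=j$, the time-reversal condition is $w\overline{M}w^{-1}=M$ (quaternionic blocks), not reality of the $M$'s.
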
 
The proof follows the background and construction of the relevant MPS tensors.

\section{Construction of the Family}
The MPS tensors are chosen such that the $SU(2)$-symmetry is apparent; the main difference between this set of MPS and many studied before is the size of the multiplicity space; its is this that allows freedom to independently choose the topological indices. Let $\alpha\neq\beta$ be irreducible representations of $SU(2)$, and let our spin chain have onsite physical representation $D^{(s)}$ with $s\in\Z$, with \begin{equation}\label{containment}
\begin{split}
D^{(s)}&\subset\alpha\otimes\beta,\quad D^{(s)}\subset\alpha\otimes\alpha,\\
D^{(s)}&\subset\beta\otimes\beta
\end{split}
\end{equation}
or numerically \begin{equation}|\beta-\alpha|\leq s\leq \min(2\alpha,2\beta)\end{equation}
Note \eqref{containment} implies $\alpha-\beta\in\Z$.
Let $C^i_{\gamma,\gamma'}$ be the conventional $SU(2)$ Clebsch-Gordan coefficients finding the spin-$s$ basis vectors $i$ inside the tensor product $\gamma\otimes\gamma'$, meaning:
\begin{equation}\label{cg}
(C^i_{\gamma,\gamma'})_{m,n}=\langle\gamma, n; s,i|\gamma,m\rangle
\end{equation}Then we define our MPS tensors
\begin{equation}\label{tensors}
A^i=\begin{bmatrix}
C^i_{\alpha,\alpha}\otimes M_{\alpha,\alpha} & C^i_{\alpha,\beta} \otimes M_{\alpha,\beta} \\
C^i_{\beta,\alpha} \otimes M_{\beta,\alpha}& C^i_{\beta,\beta}\otimes M_{\beta,\beta}
\end{bmatrix}
\end{equation}
where $M_{\gamma,\gamma'}\in \mathcal{M}_n(\mathbb{C})$ for some $n\geq 0$ and $i$ are the canonical basis for $\mathbb{C}^{2s+1}$.
Some comments on the motivation of the details in the construction are in order. This originated as a search for a nice AKLT-like state with bond algebra isomorphic to $$\caV\cong n\caV_{\frac{1}{2}}\oplus n\caV_{\frac{3}{2}}$$ for $n>1$ with constrained topological indices (i.e. within the Haldane phase). In the process of such a construction, the topological indices computed showed that the Ogata time-reversal index and Tasaki indices can be different in this ansatz. A quick check found that enlarging the multiplicities to \begin{equation}\caV\cong 4\caV_{\frac{1}{2}}\oplus 4\caV_{\frac{3}{2}}\cong (\caV_{\frac{1}{2}}\oplus \caV_{\frac{3}{2}})\otimes \mathbb{C}^4\end{equation} finds gapped ground states with any given triple \begin{equation}(\sigma_H,\sigma_T,\sigma_R)\in\Z_2\times \Z_2\times \Z_2\end{equation}
of topological indices.\par
The nature of the above construction can be viewed as the following: each matrix $M_{\gamma,\gamma'}$ chooses a particular pair of bond representations $\gamma,\gamma'$ from the left- and right-hand virtual bonds and changes how much the basis vector term $i$ is influenced by the spin-$s$ sector of their tensor product. Matrix product states allow for not just superpositions of such subrepresentations (the case of $n=1$), but of nontrivial algebraic relations between these subrepresentations; most importantly, these algebraic relations can produce gapped states while preserving the overall physical and virtual symmetries.\par
Constraining the topological indices becomes a series of constraints on the MPS tensor and thus the set of multiplicity matrices $(M_{ij})$. In order to define these constraints, we let \begin{equation}
\begin{split}
(U_j)_{m,n}&=(-1)^{j-m}\delta_{m,-n},\quad U_j^T=(-1)^{2j}U_j,\\
U_J\overline{U}_j&=(-1)^{2j}\mathbbm{1},\quad U_jU_j^T=\mathbbm{1}.
\end{split}
\end{equation}
The convention \eqref{cg} implies \begin{equation}
\label{firstcon}
U_{\gamma'}\overline{C^i_{\gamma',\gamma}}U_{\gamma}^{-1}=(-1)^{1-q}C_{\gamma,\gamma'}^{-i}
\end{equation}
\begin{equation}
\label{secondcon}
U_{\gamma'}^{-1}C^i_{\gamma',\gamma}U_{\gamma}=(-1)^{\gamma'-\gamma-1}\sqrt{\frac{2\gamma'+1}{2\gamma+1}}\left(C_{\gamma,\gamma'}^{i}\right)^T
\end{equation}
Now let $J,K$ be as-yet undefined real orthogonal $4\times 4$ matrices satisfying \begin{equation}
J\overline{J}=\sigma_J\mathbbm{1},\quad K^T=\sigma_K K,\quad JK=KJ
\end{equation}
and define \begin{equation}
\begin{split}
W&=\mathrm{diag}(U_{\alpha}\otimes J,U_{\beta}\otimes J),\\
R&=\mathrm{diag}\left(U_{\alpha}\otimes K, U_{\beta}\otimes K\right).
\end{split}
\end{equation}
The point of these definitions is to allow the time-reversal and reflection intertwiners to "miss" the $SU(2)$-invariance of the state and similarly "miss" each other in the multiplicity space, meaning the sign of the reflection and time-reversal indices, which would naturally arise from the Clebsch-Gordan intertwiner, is "undone" in the multiplicity space.\par
Thus we are looking for multiplicity matrices satsifying the constraints \begin{equation}
\begin{split}
J\overline{M_{\gamma',\gamma}}J^{-1}&=M_{\gamma',\gamma},\quad M^T_{\gamma,\gamma}=-p K^{-1}M_{\gamma,\gamma}K,\\
M_{\beta,\alpha}&=p\sqrt{\frac{2\alpha+1}{2\beta+1}}KM_{\alpha,\beta}^TK^{-1}.
\end{split}
\end{equation}
The following lemma then shows that \begin{equation}
\begin{split}
\sigma_H&=(-1)^{2\alpha}=(-1)^{2\beta},\\
\sigma_T&=\sigma_H\sigma_J,\quad \sigma_R=\sigma_H\sigma_K
\end{split}
\end{equation}
\begin{lemma} Let $\sigma_H$ be the hidden-order index defined above for the MPS tensor associated to the spin $(\alpha,\beta)$ defining a normal MPS; then
\begin{equation}\sigma_H=(-1)^{2\alpha}=(-1)^{2\beta}\end{equation}
\end{lemma}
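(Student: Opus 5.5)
Note first that $(-1)^{2\alpha}=(-1)^{2\beta}$ is immediate: \eqref{containment} forces $\alpha-\beta\in\Z$. The work is to show that the sign of $\omega(V_l)$ for our MPS converges to $(-1)^{2\alpha}$. The plan is to compute $\omega(V_l)$ directly in the transfer-operator formalism and use the $SU(2)$ covariance of the tensors \eqref{tensors}.

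\textbf{Step 1: pull the twist onto the bond.} By construction the tensor is covariant under $SU(2)$:
\begin{equation*}
\sum_j D^{(s)}(g)_{ij}A^j = V(g)\,A^i\,V(g)^{*},
\end{equation*}
where $V(g)=\mathrm{diag}(D^{(\alpha)}(g)\otimes\mathbbm{1}_n,\,D^{(\beta)}(g)\otimes\mathbbm{1}_n)$ is the projective bond representation. Write
\begin{equation*}
\mathbb{E}_\theta(X)=\sum_{i,j}\bigl(e^{-i\theta S^z}\bigr)_{ij}A^jX(A^i)^*.
\end{equation*}
Covariance gives
\begin{equation*}
\mathbb{E}_\theta(X)=e^{-i\theta \Sigma^z}\,\mathbb{E}\bigl(e^{i\theta\Sigma^z}X\bigr),
\end{equation*}
with $\Sigma^z$ the bond generator, i.e.\ $S^z_\alpha\otimes\mathbbm{1}$ and $S^z_\beta\otimes\mathbbm{1}$ on the two blocks. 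The site angles $\theta_j=2\pi(j+l)/(2l+1)$ increase in steps of $\delta=2\pi/(2l+1)$. Telescoping the consecutive bond twists therefore gives
\begin{equation*}
\omega(V_l)=\Tr\Bigl(\rho\, e^{-i\theta_{\mathrm{in}}\Sigma^z}\,\bigl[\mathbb{E}\circ \mathrm{Ad}_{e^{i\delta\Sigma^z}}\bigr]^{2l+2}\bigl(e^{i\theta_{\mathrm{out}}\Sigma^z}\bigr)\Bigr),
\end{equation*}
up to the normalisation given by the fixed points $\rho$ and $\mathbbm{1}$ of $\mathbb{E}$. The total rotation across the window is $\approx 2\pi$. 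On the bond, $e^{2\pi i\Sigma^z}=(-1)^{2\alpha}\mathbbm{1}$, which uses $(-1)^{2\alpha}=(-1)^{2\beta}$. This is exactly where the sign enters.

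\textbf{Step 2: the limit.} The operator $\mathbb{E}\circ\mathrm{Ad}_{e^{i\delta\Sigma^z}}$ is a perturbation of $\mathbb{E}$ of size $O(1/l)$. Normality of the state (the hypothesis of the lemma) gives primitivity of $\mathbb{E}$, hence a spectral gap. So this product of $2l+2$ slightly perturbed channels should behave like an adiabatic transport of the fixed point $\mathbbm{1}$.

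I expect this to be the main obstacle. The naive bound accumulates an error of $O(l\cdot\delta)=O(1)$, so one cannot simply replace each factor by $\mathbb{E}$. My first attempt is the following argument. Since $\Sigma^z$ commutes with the $U(1)$ symmetry and the fixed point is $SU(2)$-invariant, the leading-order correction per step is a first-order term. That term is killed by the projector onto the fixed point, because $\Sigma^z$ has zero expectation in the $SU(2)$-invariant $\rho$. This leaves a second-order residue of $O(\delta^2)$ per step, hence $O(1/l)$ in total.

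If the perturbative control is delicate, a fallback is available. Tasaki's result, quoted above, already gives that $\omega(V_l)$ is real with eventually constant sign. So it suffices to show that $\omega(V_l)$ has a nonzero limit with the claimed sign. For that, a Dyson-expansion estimate $|\omega(V_l)-(-1)^{2\alpha}c|\to0$ with some $c>0$ is enough.

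\textbf{Step 3: the sign.} Conclude that
\begin{equation*}
\omega(V_l)\to(-1)^{2\alpha}\Tr(\rho)=(-1)^{2\alpha},
\end{equation*}
so $\sigma_H=(-1)^{2\alpha}=(-1)^{2\beta}$. The multiplicity matrices $M_{\gamma,\gamma'}$ play no role beyond ensuring normality: they act trivially under $V(g)$.
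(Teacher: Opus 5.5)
Your proposal is correct and is essentially the paper's proof. Covariance moves the twist onto the bond, and telescoping gives $(-1)^{2\alpha}$ (from $e^{2\pi i\Sigma^z}$ at the window edge) times an $N$-th power of $\mathbb{E}$ composed with a fixed small bond twist. Your observation that the first-order term vanishes because $\mathrm{Tr}(\rho\,\Sigma^z)=0$ is exactly the paper's $\lambda'(0)=0$ for the leading eigenvalue $\lambda(\delta)$ of that single operator, whence $\lambda(\delta)^N=(1+O(\delta^2))^N\to 1$. Since all factors are identical, phrasing Step 2 via this eigenvalue is the clean way to make your per-step heuristic rigorous.

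One fix: the factor between consecutive copies of $\mathbb{E}$ is the one-sided multiplication $X\mapsto e^{-i\delta\Sigma^z}X$ that your own formula for $\mathbb{E}_\theta$ produces. It is not the conjugation $\mathrm{Ad}_{e^{i\delta\Sigma^z}}$, which commutes with $\mathbb{E}$ by covariance and would leave nothing to perturb.
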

\begin{proof}
For each choice of $\alpha,\beta$ let $\mathbb{E}(X)$ be the primitive normalized transfer operator, and let $R,L$ be its respective left and right fixed points. Let $$Z=\mathrm{diag}(S_{\alpha}^z\otimes \mathbbm{1}_4,S_{\beta}^z\otimes\mathbbm{1}_4),\quad V(\theta)=e^{-i\theta Z}$$
then \begin{equation}
e^{-i\theta i}A^i=V(\theta)A^iV(\theta)^{-1},\quad V(2\pi)=(-1)^{2\alpha}=(-1)^{2\beta}
\end{equation}
Defining $\caL_V(X)=VX$, the transfer operator with a twist is $\mathbb{E}_{\theta}=\mathcal{L}_{V(\theta)}\mathbb{E}\mathcal{L}_{V(\theta)^{-1}}$. On sites $0,...,N$ with $N=2l+1$ with $\Delta=\frac{2\pi}{N}$ we have
\begin{equation}
\omega(V_l)=(-1)^{2\alpha}\mathrm{tr}\left[L(\mathbb{E}\mathcal{L}_{V(\Delta)})^N(R)\right]
\end{equation}
The leading simple eigenvalue $\lambda(\Delta)$ of $\mathbb{E}\mathcal{L}_{V(\Delta)}$ has the properties \begin{equation}
\lambda(0)=1,\quad \lambda'(0)=-i\mathrm{tr}(LZR)=0
\end{equation}
with the last equality following from $z$-rotation invariance of the fixed point, and so $\lambda(\Delta)^N\to 1$ and thus $\omega(V_l)\to\sigma_H=(-1)^{2\alpha}=(-1)^{2\beta}$.
\end{proof} 
Note the $z$-rotation invariance is also crucial to showing the Ogata $D_2$ index is equivalent to the Tasaki index \cite{tasakinew}.
\subsection{Finding representatives of each phase} 
From here we will assume $s=1$.
An explicit MPS tensor satisfying the constraints above can be found by simple random sampling of the subspace given by the constraints above and the $J$ and $K$ matrices defined in the proof below. A cute explicit example can be found which relies on the fact that the coefficients of $C^{i}_{\gamma,\gamma'}$ are contained in a particular ring of algebraic integers. We provide the derivation here for the curious, and constitutes a proof of Theorem 1 via computation.
\begin{proof}
Define \begin{equation}j=\begin{bmatrix}
0 & 1 \\ -1 & 0
\end{bmatrix}, \quad p=-\sigma_H\sigma_R
\end{equation}
\begin{equation}
\begin{split}
J&=\begin{cases}
\mathbbm{1}_4,& \sigma_H\sigma_T=1,\\
\mathrm{diag}(j,j), & \sigma_H\sigma_T=-1
\end{cases},\\
K&=\begin{cases}
\mathrm{diag}(1,1,-1,-1),& \sigma_H\sigma_R=1,\\
\mathrm{diag}(j,-j), & \sigma_H\sigma_R=-1
\end{cases}
\end{split}
\end{equation}
We note $JK=KJ$.
The entries of $A$ lie in $\Z[\frac{1}{30},i,\sqrt{2},\sqrt{3},\sqrt{5},\sqrt{7}]$ with denominators not divisible by the prime $1009$ which is the smallest prime $p$ where all the prime roots have a integer solution $\mod p$. This fact can be worked out from the rational Racah formula, the proof of which we elide, as one can also square each term in the Clebsch-Gordan tensors to see this. We map the matrices under the homomorphism to $\mathbb{F}_{1009}$ given by \begin{equation}
\begin{split}
i&\mapsto 469, \quad \sqrt{2}\mapsto 439,\quad \sqrt{3}\mapsto 149,\\
\sqrt{5}&\mapsto 244, \quad \sqrt{7}\mapsto 45
\end{split}
\end{equation}
We choose random matrices with coefficients in the set $\{-3,-2,...,2,3; -3i,-2i,...,2i,3i\}$:
\begin{align*}
 Z_1&=\begin{pmatrix}2&3&3&0\\3&3&3&-3\\0&1&-2&-1\\1&2&1&-2\end{pmatrix}
 +i\begin{pmatrix}1&3&-2&0\\-1&3&-3&0\\3&0&-3&2\\3&3&3&-1\end{pmatrix},\\[4pt]
 Z_2&=\begin{pmatrix}2&3&1&3\\1&-2&0&1\\-1&1&-1&3\\-2&1&-3&-3\end{pmatrix}
 +i\begin{pmatrix}0&0&0&0\\2&0&-1&3\\3&-1&-2&-2\\-2&0&1&1\end{pmatrix},\\[4pt]
 Z_3&=\begin{pmatrix}3&3&0&1\\-3&-2&3&3\\-3&0&0&1\\-3&0&-1&-2\end{pmatrix}
 +i\begin{pmatrix}2&1&-2&2\\0&-2&1&0\\0&-1&3&-2\\3&-2&2&3\end{pmatrix}.
\end{align*}and let \begin{equation}
X_k=Z_k+J\overline{Z}_kJ^{-1},\quad k=1,2,3
\end{equation}
Then an explicit injective solution is given by defining the multiplicity matrices:
\begin{equation}
\begin{split}
M_{\alpha,\alpha}&=X_1-pKX_1^TK^{-1},\\
M_{\beta,\beta}&=X_2-pKX_2^TK^{-1}
\end{split}
\end{equation}
\begin{equation}
\begin{split}
M_{\alpha,\beta}&=X_3,\\
M_{\beta,\alpha}&=p\sqrt{\frac{2\alpha+1}{2\beta+1}}KX_3^TK^{-1}
\end{split}
\end{equation} and $C^i_{\alpha,\beta}$ etc. determined by the desired $\sigma_H$.
\end{proof}
The Fortran code at the GitHub link evaluates the rational Racah formula for the Clebsch-Gordan matrices by mapping to $\mathbb{F}_{1009}$ and computes via row reduction $\mathrm{span}\{A^{i_1},...,A^{i_l}:\: i_k\in\{-1,0,1\}\}$ using the above random matrices until injectivity is achieved.

\begin{table*}
\centering
\small
\setlength{\tabcolsep}{6pt} 
\caption{States with lowest correlation length from sampling 100,000 per topological sector for spin-$1$}
\label{t:table}
\begin{tabular}{|c|c|c|c|c|}
\hline\hline
\makecell{$(\sigma_H,\sigma_T,\sigma_R)$} & \makecell{Bond\\dimension} & \makecell{Injectivity\\length} & \makecell{Correlation\\length} \\
\hline
$(-1,-1,-1)$ & 12 & 6 &  0.7191  \\
\hline
$(+1,+1,+1)$ & 16 & 6 & 0.9500 \\
\hline
$(-1,+1,-1)$ & 12 & 6 & 1.9814 \\
\hline
$(+1,-1,+1)$ & 16 & 6 &  2.1225 \\
\hline
$(+1,-1,-1)$ & 32 & 8 & 4.4197\\
\hline
$(+1,+1,-1)$ & 32 & 8 & 4.4879 \\
\hline
$(-1,+1,+1)$ & 24 & 8 & 4.5598 \\
\hline
$(-1,-1,+1)$ & 24 & 8 & 4.7874 \\
\hline
\end{tabular}
\end{table*}
\begin{table*}
\centering
\small
\setlength{\tabcolsep}{6pt}
\caption{States with lowest correlation length from sampling 10,000 per topological sector for spin-$2$}
\label{t:table_new2}
\begin{tabular}{|c|c|c|c|}
\hline\hline
\makecell{$(\sigma_H,\sigma_T,\sigma_R)$} & \makecell{Bond\\dimension} & \makecell{Injectivity\\length} & \makecell{Correlation\\length} \\
\hline
$(+1,+1,+1)$ & 16 & 5 & 0.9031 \\
\hline
$(-1,-1,-1)$ & 20 & 5 & 1.0706 \\
\hline
$(+1,-1,+1)$ & 16 & 5 & 1.8342 \\
\hline
$(-1,+1,-1)$ & 20 & 5 & 1.9765 \\
\hline
$(-1,+1,+1)$ & 40 & 7 & 4.8291 \\
\hline
$(+1,-1,-1)$ & 32 & 7 &  4.8949 \\
\hline
$(+1,+1,-1)$ & 32 & 7 & 5.1332 \\
\hline
$(-1,-1,+1)$ & 40 & 7 & 5.1394 \\
\hline
\end{tabular}
\end{table*}
\begin{table*}
\centering
\small
\setlength{\tabcolsep}{6pt}
\caption{States with lowest correlation length from sampling 1,000 per topological sector for spin-$3$}
\label{t:table_new}
\begin{tabular}{|c|c|c|c|}
\hline\hline
\makecell{$(\sigma_H,\sigma_T,\sigma_R)$} & \makecell{Bond\\dimension} & \makecell{Injectivity\\length} & \makecell{Correlation\\length} \\
\hline
$(-1,-1,-1)$ & 20 & 5 & 1.0120 \\
\hline
$(+1,+1,+1)$ & 24 & 5 & 1.3313 \\
\hline
$(+1,-1,+1)$ & 48 & 6 &  1.9135\\
\hline
$(-1,+1,-1)$ & 40 & 6 &  1.8656\\
\hline
$(-1,+1,+1)$ & 40 & 7 & 4.5967 \\
\hline
$(+1,-1,-1)$ & 48 & 7 & 4.6122  \\
\hline
$(+1,+1,-1)$ & 48 & 7 & 5.1620 \\
\hline
$(-1,-1,+1)$ & 40 & 7 & 5.1394 \\
\hline
\end{tabular}
\end{table*}
The injectivity lengths listed are generic in the set of MPS tensors, as is the range of the parent Hamiltonian, via the arguments of \cite{smpg}, although shorter injectivity length representatives of each phase can be found, e.g. the AKLT state for $(-1,-1,-1)$, which we note has longer correlation length than the minimal representative of the Haldane phase $(-1,-1,-1)$ found in the table.

\section{Conclusion and Outlook}
This paper is meant mainly to provide a simple matrix product state ansatz for those looking for phase representatives.
We can see that the tensors provided in \eqref{tensors} can be generalized \textit{mutatis mutandis} to other on-site symmetry groups, although the existence of a normal solution generically is seemingly more of an algebraic-geometric statement. We can see that a corollary of Theorem 1 via \cite{smpg} is that this injectivity length and parent Hamiltonian range is generic within the subspace described. At the moment proving that a candidate always exists for spin-$s$ of a particular injectivity length seems difficult, and it would be interesting to see if any of the "flipped" phases exist in the nearest-neighbor $SU(2)$-invariant phase diagram for some integer $s$. Some indication that the "simplest" triples $(+1,+1,+1)$ and $(-1,-1,-1)$ are easier to realize physically is given by the trend in correlation lengths, although it seems unclear how to even rigorously characterize this. The generalization of the ansatz to tensor network states in two or higher dimensions formally does not seem too difficult, but showing any particular choice defines even a unique state is at present hard. 
\section{AI and Code Statement}
The main ideas of this paper are human as are its mistakes. The ansatz became convenient while exploring MPS approximants to $SU(2)$-symmetric states using among other things a healthy amount of freely available (Claude and Gemini, various versions) coding agents and chatbots over the past few months; these free tools were used for some of the formatting of equations and tables in the TeX. The author used whatever paid ChatGPT model is current to proofread it, and it supplied the integer solution (the previous work was floating-point) via the ring of algebraic integers mapping, and the Fortran code verifying it, unprompted. This was sweet of it to do, so the author included it in the paper. The free LLMs retooled this code to work for general spin and multiplicity. The author takes full responsibility for the content and correctness of the results presented. The \texttt{indices7.f90} file for generating any number of random examples of every triple for any integer spin $s$ and multiplicity $n$ is given at the GitHub link: \url{https://github.com/ThomasAJackson/ltqo_code}. It is also commented into this TeX file. It can be compiled by installing \texttt{gfortran} and following the compilation instructions in the program comments.

\section{Acknowledgements}
The author would like to thank Martin Fraas and Bruno Nachtergaele for general guidance, and Can G\"{o}rmez for lending his tokens for the final proofread. This work was supported in part by the National Science Foundation under grant DMS-2510824.
\printbibliography
\end{document}